\title{Characterizing DAG-depth of Directed Graphs}
\author{Matúš Bezek\thanks{Research supported by the Center of Excellence --
Institute for Theoretical Computer Science;
Czech Science Foundation Project no.~P202/12/G061.}
\institute{Faculty of Informatics, \\Masaryk University, \\Brno, Czech Republic,\\
\email{xbezek@fi.muni.cz}}}
\begin{document}
\maketitle

\begin{abstract}
    We study DAG-depth, a structural depth measure of directed graphs, 
	which naturally extends the tree-depth of ordinary graphs. We define a DAG-depth 
    decomposition as a strategy for the cop player in the lift-free version of the
    cops-and-robber game on directed graphs and prove its correctness. The DAG-depth
    decomposition is related to DAG-depth in a similar way as an elimination tree is related
    to the tree-depth. We study the size aspect of DAG-depth decomposition and provide a 
    definition of mergeable and optimally mergeable vertices in order to make the decomposition
    smaller and acceptable for the cop player as a strategy. We also provide a way to find the closure
    of a DAG-depth decomposition, which is the largest digraph for which the given decomposition 
    represents a winning strategy for the cop player.
\end{abstract}

\newtheorem{theorem}{Theorem}[section] 
\newtheorem{lemma}[theorem]{Lemma}     
\newtheorem{corr}[theorem]{Corrolary}  
\newtheorem{definition}{Definition}
\newtheorem{observation}[theorem]{Observation}
\newtheorem{proposition}[theorem]{Proposition}
\newtheorem{innercustomthm}{Theorem}
\newenvironment{customthm}[1]
  {\renewcommand\theinnercustomthm{#1}\innercustomthm}
  {\endinnercustomthm}
\newtheorem{innercustomlemma}{Lemma}
\newenvironment{customlemma}[1]
  {\renewcommand\theinnercustomlemma{#1}\innercustomlemma}
  {\endinnercustomlemma}

\section{Introduction}
Structural width parameters are numeric parameters associated with graphs. They represent different
properties of graphs. Examples of such parameters are path-width~\cite{article-minorspath}, 
tree-width~\cite{article-minorstree} and clique-width~\cite{article-cliquewidth}. The first two 
were defined by Robertson and Seymour in 1980s, clique-width was defined by Courcelle~et~al. 
in 1991. Informally, path-width measures how close a graph is to a path and the other two
similarly relate to trees.

As a directed analog of tree-width, directed tree-width~\cite{article-dirtreewidth} was defined
by Johnson et al. in 1998. This line of research continued in Obdržálek's definition of 
DAG-width~\cite{article-dagwidth} in 2006. Another digraph measure 
Kelly-width~\cite{article-kellywidth} was defined by Hunter and Kreutzer in 2007. In 2010 
Ganian~et~al. analyzed~\cite{article-gooddigraph} digraph width measures and reasons why the 
search for the "perfect" directed analog of tree-width has not been successful so far.

All these parameters are tightly correlated with different versions of a cops-and-robber game with
an infinitely fast robber. The essence of this game is to catch the robber by placing 
the cops in the vertices and moving them. 

Structural depth parameters are analogously correlated with the so-called lift-free version of
the game, defined in Section~\ref{sec:liftfree}. An example of such a parameter is 
tree-depth~\cite{article-treedepth}, defined by Nešetřil and Ossona de Mendez in 2005.
In 2012 Adler~et~al. defined~\cite{article-hypertree} a hypergraph analog of tree-depth.
In that work Adler~et~al. also studied generalizations of the elimination tree
for hypergraphs.

A directed analog of tree-depth was defined under the name DAG-depth~\cite{article-ddp} by Ganian et al. in
2009. Its definition, however, did not provide any structural insight into the parameter since
there was no naturally associated decomposition with it.

We define a DAG-depth decomposition of a digraph and show that it can be used as a winning
strategy for the cop player in the lift-free version of the cops-and-robber game in directed
graph. The main issue is that an optimal decomposition usually has to contain multiple copies
of original vertices.

\section{Preliminaries}
\label{chp:definitions}
\label{chp:game}
\newcommand{\cupdot}{\mathbin{\mathaccent\cdot\cup}}
  We deal with directed graphs.

  An \textit{outdegree} $d_D^+(v)$ of the vertex $v \in V(D)$ is the number of edges going
  from $v$. An \textit{indegree} $d_D^-(v)$ of the vertex $v \in V(D)$ is the number of edges
  coming to $v$. An \textit{out-neighborhood}, denoted by $N_D^+(v)$, is the set of vertices $x$ 
  such that the edge $(v,x)$ exists in $D$.
  
  An acyclic directed graph is shortly called a DAG.
  In DAG, vertex $u$ is a parent of $v$ if the digraph contains an edge
  $(u,v)$. Vertex $v$ is then a child of $u$. The vertex $u$ is an ancestor of $v$ if the
  digraph contains a path from $u$ to $v$. If $u$ is an ancestor of $v$, then $v$ is a 
  descendant of $u$.
  
  One of the ways to extend connectivity to directed graphs is the concept of \textit{reachable
  fragments}. Reachable fragments are maximal, by inclusion, sets of vertices such that every
  fragment $R$ contains a vertex called the source, from which there is a path to every vertex 
  of $R$.

\subsection{Original cops-and-robber game}
The \textit{cops-and-robber} game was first introduced~\cite{article-lapaugh} in 1982 by LaPaugh.
The variant we are interested in was introduced~\cite{article-orgtwidth} in 1989 by Seymour and 
Thomas. The main difference between them is that in the version by Seymour and Thomas the robber
is infinitely fast, while in LaPaugh's version he is not.

The game by Seymour and Thomas is played by one player on a~finite undirected graph $G$. The 
player controls $k$ cops. At any time each of them either stays on some vertex or is temporarily
removed from the graph. The player moves the cops, he can remove them from the graph and place 
them back into any vertex he wants.

The robber always stands on some vertex of $G$. During the game, he can move at any time along 
the edges at infinite speed. He is not allowed to run through a cop but he can see when the cop
is being placed on some vertex and he can run through that vertex before the cop lands.

The cop player wins when the cops catch a robber, i.e. when the robber is in some vertex $v$ 
such that there is a cop placed in each vertex of $N^+(v)$ and also on $v$. The player 
loses if the robber is able to avoid getting caught.

The robber is always aware of cops' position and the player is aware of robber's position.
The minimal number of cops needed to catch a robber on a graph is called the \textit{cop number}
of the graph.

\subsection{Lift-free version of the game}
\label{sec:liftfree}
In the \textit{lift-free} version of the cops-and-robber game there is an additional rule; once 
the cop is placed to some vertex, he stays there until the end of the game. In each turn the cop
player puts a cop onto some vertex of the graph. The game ends when the robber is caught or the
cop player runs out of cops. If the robber is caught, the cop player wins, otherwise he loses.


\subsection{Extension to directed graphs}
The concept of the cops-and-robber game can be naturally extended to directed graphs. 
The robber can only move along the edges in the right direction.

The aforementioned DAG-depth~\cite{article-ddp}, introduced by Ganian et al. in 
2009, is given as follows.
\begin{definition}[DAG-depth]
  Let $D$ be a digraph and $R_1,\dots,R_p$ the reachable fragments of $D$. The DAG-depth $ddp(D)$ 
  is inductively defined:
  \[ ddp(D) =
  \begin{cases}
   1										& \quad \text{if } |V(D)|=1\\
   1 + \min_{v \in V(D)} (ddp(D - v))		& \quad \text{if } p=1 \text{ and } |V(D)|>1\\
   \max_{1 \leq i \leq p} ddp(R_i) 			& \quad \text{otherwise}\\
  \end{cases}
  \]
\label{def:ddp}
\end{definition}

DAG-depth is directly related to the lift-free game as follows (where a proof is quite obvious):
\begin{theorem}
  Let $D$ be a digraph. There exists a lift-free winning strategy for $k$ cops, if and only if
  DAG-depth of $D$ is less or equal to $k$.
\end{theorem}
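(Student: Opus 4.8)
The plan is to prove both directions by induction on $|V(D)|$, following the recursive structure of the DAG-depth definition. The statement asserts that $ddp(D) \le k$ if and only if $k$ cops have a lift-free winning strategy, so I would actually prove the sharper claim that the minimal number of cops needed equals $ddp(D)$ exactly, and then read off the theorem. The base case $|V(D)| = 1$ is immediate: one cop suffices and is necessary, matching $ddp(D) = 1$.

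For the inductive step I would split along the three cases of Definition~\ref{def:ddp}. The interesting case is when $D$ has a single reachable fragment ($p = 1$) and more than one vertex, where $ddp(D) = 1 + \min_{v} ddp(D - v)$. For the ``if'' direction (sufficiency), suppose $ddp(D) \le k$; then some vertex $v$ achieves $ddp(D - v) \le k - 1$. The cop strategy is to place the first cop on $v$ permanently. Since the cop never leaves $v$, the robber is confined to $D - v$ (he can never re-enter $v$, and any path he might use is cut at $v$), and by the inductive hypothesis the remaining $k - 1$ cops win on $D - v$. Crucially, the lift-free rule is what makes placing a cop on $v$ permanently equivalent to deleting $v$ from the digraph. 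For the ``only if'' direction (necessity), I would argue contrapositively: if $ddp(D) > k$, then for every vertex $v$ we have $ddp(D - v) > k - 1$, so wherever the cop player places the first cop, the robber can retreat into the fragment of $D - v$ whose DAG-depth still exceeds $k - 1$, and by induction survives against the remaining cops.

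The third case, where $D$ has several reachable fragments $R_1, \dots, R_p$ and $ddp(D) = \max_i ddp(R_i)$, needs the observation that the robber, once he commits to a fragment, can be pursued independently in each $R_i$. The robber picks the fragment $R_j$ maximizing $ddp(R_j)$; because fragments are reachability-closed, cops placed outside $R_j$ are irrelevant to catching him, so the cop player effectively faces the subgame on $R_j$, and the inductive hypothesis gives the bound. Conversely $\max_i ddp(R_i)$ cops suffice because the cop player can wait to see which fragment the robber occupies and reuse the same pool of cops on whichever fragment is active.

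The main obstacle, and the part deserving the most care despite the author's remark that the proof is ``quite obvious,'' is formalizing the reduction from ``placing a permanent cop on $v$'' to ``playing on $D - v$.'' One must verify that the robber genuinely cannot benefit from $v$ once it is guarded: in a directed graph the robber moves only along edge directions, so I would confirm that a cop on $v$ both prevents the robber from standing on $v$ and severs every robber-usable path through $v$, so that the reachable fragments of the residual game are exactly those of $D - v$. A secondary subtlety is pinning down the precise winning condition for directed graphs (the excerpt states the robber is caught at $v$ when every vertex of $N^+(v)$ and $v$ itself is occupied) and checking that this condition is consistent with the inductive deletion argument; I would make sure the catch happens precisely when the robber is cornered in a fragment that has shrunk to a single vertex.
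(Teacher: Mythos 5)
The paper states this theorem without proof (calling it ``quite obvious''), and your induction on $|V(D)|$ following the three cases of Definition~\ref{def:ddp} is exactly the standard argument one would supply; it is correct, including your identification of the lift-free rule as what makes a permanent cop on $v$ equivalent to deleting $v$. The one refinement worth making explicit is that the induction hypothesis for the lower bound should place the robber at a source of the relevant reachable fragment: then, since the robber may run through a vertex before the announced cop lands, he can always reach a source of a deepest fragment of $D-v$ and invoke the hypothesis there, which makes your ``retreat into the deep fragment'' step rigorous.
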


\section{DAG-depth decomposition}
\label{chp:decomposition}
The aim of this section is to define a DAG-depth decomposition (Definition~\ref{def:nCover}) from 
the recursive definition of DAG-depth (Definition~\ref{def:ddp}) the same way as an elimination tree
is obtained from the definition of tree-depth. The decomposition aims to represent a game plan for 
the cop player.

The main difference between the tree-depth and DAG-depth cases is that in undirected graphs two 
connected components cannot have any vertices in common, while distinct maximal reachable fragments
in directed graphs can have some vertices in common. This naturally brings complications and 
ambiguity.

There could be two ways to resolve this. Either just ignore it and let the decomposition have more 
copies of one vertex. But that would mean the decomposition could grow exponentially large (see 
Section~\ref{sec:explarge}, and
exponentially large decompositions would be practically useless for the player as a game plan and 
for algorithmic purposes.

\begin{figure}[H]
\begin{center}
\begin{tikzpicture}
\tikzset{vertex/.style = {shape=circle,draw,minimum size=1.5em}}
\tikzset{edge/.style = {->,> = latex', line width=1.4pt}}
\node[vertex] (a) at  (4,3) {A};
\node[vertex] (b) at  (4,1) {B};
\node[vertex] (c) at  (5.5,2) {C};
\node[vertex] (d) at  (7.5,2) {D};
\node[vertex] (e) at (9,3) {E};
\node[vertex] (f) at (9,1) {F};
\node[vertex] (da) at  (4,-4) {A};
\node[vertex] (db) at  (5,-4) {B};
\node[vertex] (dc) at  (5,-1) {C};
\node[vertex] (dd) at  (6,-4) {D};
\node[vertex] (dc2) at  (7,-4) {C};
\node[vertex] (dd2) at  (8,-1) {D};
\node[vertex] (de) at (8,-4) {E};
\node[vertex] (df) at (9,-4) {F};
\draw[edge] (a) to (c);
\draw[edge] (b) to (c);
\draw[edge] (c) to (d);
\draw[edge] (d) to (c);
\draw[edge] (e) to (d);
\draw[edge] (f) to (d);
\draw[edge] (dc) to (da);
\draw[edge] (dc) to (db);
\draw[edge] (dc) to (dd);
\draw[edge] (dd2) to (de);
\draw[edge] (dd2) to (df);
\draw[edge] (dd2) to (dc2);
\end{tikzpicture}
\end{center}
\caption{A simple digraph and its decomposition, showing that DAG-depth decomposition cannot
always be done optimally without repetition of vertices (see repeated $C$,$D$)}
\label{fig:mygraph}
\end{figure}
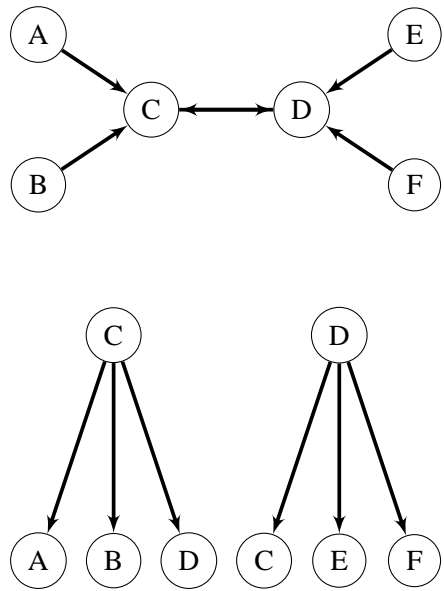

The other extreme solution would be to merge all the copies of one vertex. This cannot always 
be done, as the graph in Figure~\ref{fig:mygraph} shows.

In this graph, the robber can be caught by using two cops. The idea is that if the robber 
starts on the vertices $A$ or $B$, the player places the first cop on the vertex $C$. Then
the robber has either stayed on the vertex he was on, or ran to $D$. Placing the cop on the
robber will catch him, since there is no edge between $A$ and $D$ or $B$ and $D$. Symmetrically,
if the robber starts on $E$ or $F$, the first cop is placed to $D$ and second catches the robber. 
If the robber starts on $C$ or $D$, he can not go into any other vertex and covering these two 
vertices in any order will result into a win.

In the corresponding "decomposition" (Figure~\ref{fig:mygraph} bottom), the two copies 
of the vertex $C$ can not be merged since their merging would create a path of length 
$2$ and therefore the decomposition would not be optimal anymore. For the same reason 
the copies of $D$ can not be merged, too.

Balancing these two extreme approaches would give us decompositions with some of the repeated 
vertices merged. Now the core question is, which vertices can be merged and why.

\subsection{Basic properties of a DAG-depth decomposition}
As argued above, in a decomposition some vertices will be copies of the same original 
vertex $v \in D$. To properly work with this fact, there is a need to formally distinguish 
the two vertex sets and map between them.

This is the first difference from an elimination tree of tree-depth where the vertex sets are
identical. Formally, this can be done by defining the function $org : V(P) \to V(D)$ which takes
a vertex from the decomposition and returns its original from the digraph~$D$. Vertices $x,y \in
V(P)$ are copies of the same vertex if and only if $org(x) = org(y)$.\\

\textit{Roots} of a DAG are all of its vertices whose indegree is zero. Vertices whose outdegree
is zero are called \textit{leafs}.

The \textit{level} of a vertex in a DAG is the maximal length of a directed path from a root to this
vertex. The \textit{depth} of a vertex is the maximal length of a path from this vertex to a leaf.
The \textit{depth} of a DAG is the maximum depth over its vertices.

\begin{definition}[DAG-depth decomposition]
  A \textit{DAG-depth decomposition} of a digraph $D$ is a DAG $P$ and a surjective function 
  $org : V(P) \to V(D)$. Furthermore, a DAG-depth decomposition is called \textit{valid} if 
  the following Neighbor cover condition holds.
  
  The \textit{Neighbor cover} condition states that for every vertex $v' \in V(P)$ such that 
  $org(v') = v$, the following holds:\\
  For every $u \in N_D^+(v)$, 
  \begin{enumerate}
   \item there exists $u' \in V(P)$ such that $org(u')=u$ and $u'$ is a descendant of $v'$ 
   in $P$, or
   \item every path from any root of $P$ to $v'$ contains a vertex $u' \in V(P)$ such that
   $org(u')=u$.
  \end{enumerate}
\label{def:nCover}
\end{definition}

Let $P$ be a DAG-depth decomposition of some graph $D$ such that the depth of $P$ is equal
to the DAG-depth of $D$. $P$ is then called an \textit{optimal} decomposition. Such 
decomposition exists for any digraph $D$, as Theorem~\ref{thm:validdecK} shows.\\

The following example of Figure~\ref{gra:hlinGraph} illustrates how a valid DAG-depth decomposition
can be used as a strategy for the cop player to catch the robber.

If the player is to use the decomposition in Figure~\ref{gra:hlinGraph} as a game plan, he 
starts by covering the vertex $E$, since its copy is the only root.

Then, if the robber is in the vertex $A$ or $B$, the player continues by covering the vertex
$B$. If the robber was in this vertex, he has been caught. Otherwise he is in the vertex $A$,
which the player will cover by the third cop and therefore catch the robber.

If the robber was not in the vertex $A$ or $B$, the player places the second cop in the vertex
$G$, whose copy is on the same level as the copy of $B$. There are now three possibilities
where the robber can be. The first one is that he is in one of the vertices $C$, $D$. The
second one is that he is in the vertex $F$. The last one is that the robber is hiding in one
of the vertices $H$, $I$, or $J$.

\begin{figure}
\begin{center}
\begin{tikzpicture}
\tikzset{vertex/.style = {shape=circle,draw,minimum size=1.5em}}
\tikzset{edge/.style = {->,> = latex', line width=1.4pt}}
\node[vertex] (a) at  (-2.5,4) {A};
\node[vertex] (b) at  (-1,4) {B};
\node[vertex] (c) at  (6.5,4) {C};
\node[vertex] (d) at  (5,4) {D};
\node[vertex] (e) at (0.5,4) {E};
\node[vertex] (f) at (2,4) {F};
\node[vertex] (g) at  (3.5,4) {G};
\node[vertex] (h) at  (3.5,2.5) {H};
\node[vertex] (i) at (2,2.5) {I};
\node[vertex] (j) at (0.5,2.5) {J};
\node[vertex] (da) at  (0,-1.25) {A};
\node[vertex] (db) at  (1,0) {B};
\node[vertex] (dc) at  (5,-2.25) {C};
\node[vertex] (dd) at  (4,-1.25) {D};
\node[vertex] (de) at (2,1) {E};
\node[vertex] (df) at (3,-1.25) {F};
\node[vertex] (dg) at  (3,0) {G};
\node[vertex] (dh) at  (1,-2.25) {H};
\node[vertex] (di) at (2,-1.25) {I};
\node[vertex] (dj) at (3,-2.25) {J};
\draw[edge] (a) to (b);
\draw[edge] (b) to (e);
\draw[edge] (c) to (d);
\draw[edge] (d) to (g);
\draw[edge] (e) to (f);
\draw[edge] (f) to (e);
\draw[edge] (g) to (f);
\draw[edge] (f) to (g);
\draw[edge] (g) to (h);
\draw[edge] (h) to (g);
\draw[edge] (i) to (h);
\draw[edge] (h) to (i);
\draw[edge] (i) to (j);
\draw[edge] (j) to (i);
\draw[edge] (e) to (j);
\draw[edge] (j) to (e);
\draw[edge] (de) to (dg);
\draw[edge] (de) to (db);
\draw[edge] (db) to (da);
\draw[edge] (dg) to (df);
\draw[edge] (dg) to (di);
\draw[edge] (dg) to (dd);
\draw[edge] (dd) to (dc);
\draw[edge] (di) to (dj);
\draw[edge] (di) to (dh);

\end{tikzpicture}
\end{center}
\caption{A simple graph and its valid and optimal decomposition}
\label{gra:hlinGraph}
\end{figure}
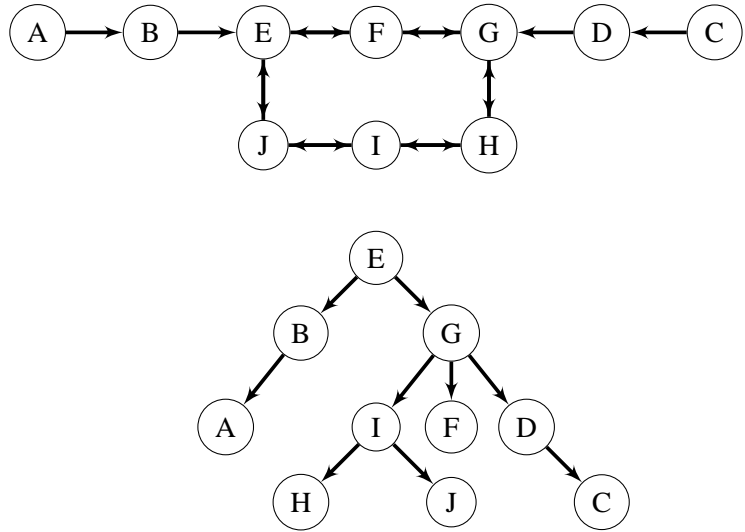

If it is the first case, the player continues by covering the vertex~$D$, whose copy is the child
of the copy of the last covered vertex. If the robber was here, he is caught, otherwise the cop
is placed to $C$ and catches the robber.

If it is the second case and the robber is hiding in the vertex $F$, the player simply covers
the vertex $F$ and catches the robber. If the robber is in one of the vertices $H$, $I$, or $J$,
the player covers the vertex $I$, whose copy is the child of the copy of the last covered vertex 
$G$. The robber then escapes either to vertex $H$ or to $J$. In the last step the player simply 
covers the vertex robber is in and catches him.

The game rules outlined in this example are formally defined next.

\begin{definition}
  Given a DAG-depth decomposition $(P,org)$ of a digraph $D$, the cop player's strategy is as
  follows. The cops are placed on the vertices of $D$ and every cop is placed "because of" some
  vertex of $P$. The following convention is observed: if we say a cop is to be placed to a 
  vertex $v' \in V(P)$, he is actually placed to $v \in V(D)$ such that $org(v')=v$, unless the
  vertex has been covered by another cop before. In that case, no cop is placed in this step. 
  Then, the strategy based on $(P,org)$ is described by two simple rules:
  \begin{enumerate}
   \item The first cop is always placed to one of the roots of $P$. Each subsequent cop is 
   placed to the out-neighborhood of the previous cop in $P$.
   \label{num:placingRule1}
   \item Among the possible positions from \ref{num:placingRule1}, the actually chosen one 
   must have in $P$ a directed path to a copy of robber's current position.
   \label{num:placingRule2}
  \end{enumerate}
\label{def:placingRules}
\end{definition}

The choice of the next vertex to be covered by a cop in Definition~\ref{def:placingRules} is generally 
non-deterministic, since more vertices can contain robber's position as a~descendant. 

\begin{theorem}
  Let $D$ be a digraph and $(P,org)$ its DAG-depth decomposition of depth $k$. Then the decomposition
  is valid if and only if every strategy based on $(P,org)$ by Definition~\ref{def:placingRules} is
  winning for $k$ cops.
\end{theorem}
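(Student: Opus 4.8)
The plan is to prove the two implications separately, with the forward implication (validity $\Rightarrow$ every strategy wins) carrying the bulk of the structural work through a single \emph{confinement invariant}, and the converse handled by its contrapositive through an explicit escaping play. For a vertex $w' \in V(P)$ write $T(w')$ for the set of original vertices $org(x')$ ranging over all descendants $x'$ of $w'$ in $P$ (its \emph{territory}); note that $T(w')$ shrinks as we descend. First I would set up the game bookkeeping: a play following Definition~\ref{def:placingRules} traces a directed path $v'_0, v'_1, \dots$ in $P$ from a root, placing at most one new cop per step, and since $P$ has depth $k$ every such path has at most $k$ vertices, so at most $k$ cops are ever used. Let $C_j = \{org(v'_0),\dots,org(v'_j)\}$ be the covered set after step $j$.

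The heart of the forward direction is the invariant: \emph{if after step $j$ the robber is not yet caught, then his position $r$ lies in $T(v'_j)$, and moreover every vertex the robber can reach in $D$ while avoiding $C_j$ has a copy that is a descendant of $v'_j$.} The base case is the first cop, where $v'_0$ is a root chosen by rule~\ref{num:placingRule2} to have a $P$-path to a copy of $r$; I would then propagate along any robber move $t \to t_2$ with $t_2 \notin C_0$, invoking the Neighbor cover condition at the copy $t'$ of $t$: in case~1 a copy of $t_2$ sits below $t'$, hence below $v'_0$, and in case~2 the root-to-$t'$ path through $v'_0$ must contain a copy of $t_2$, which is again a descendant of $v'_0$. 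The inductive step is identical in spirit: rule~\ref{num:placingRule2} forces the new cop $v'_{j+1}$ onto a child of $v'_j$ lying on a $P$-path to a copy of the observed position $r$ (such a child exists because $r \in T(v'_j)$ and $r$ is uncovered, so its witnessing copy is a \emph{proper} descendant of $v'_j$), which seeds a copy of $r$ below $v'_{j+1}$; propagating as before, and using that any copy forced onto the extended path $v'_0,\dots,v'_{j+1}$ is either already in $C_{j+1}$ (excluded, since the robber avoids it) or a proper descendant of $v'_{j+1}$, restores the invariant for $j+1$. Finiteness of $P$ then finishes: when the traced path reaches a leaf $\ell'$ the territory is the single covered vertex $org(\ell')$, so the ``not yet caught'' branch of the invariant is vacuous and the robber must already have been caught, using at most $k$ cops.

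For the converse I would argue the contrapositive: assuming the Neighbor cover condition fails, I exhibit one rule-respecting play that does not catch the robber. A failure gives a vertex $v'$ with $org(v')=v$, an out-neighbor $u \in N_D^+(v)$ with \emph{no} copy of $u$ below $v'$, and a root-to-$v'$ path $\pi$ containing no copy of $u$. The intended losing play lets the cops walk down $\pi$ while the robber sits on $v$ and, at the very instant a cop is placed on $v$, slips along the edge $(v,u)$ to $u$; since $\pi$ carries no copy of $u$ the vertex $u$ is never covered, and since no copy of $u$ descends from $v'$, rule~\ref{num:placingRule2} leaves the cop player with no legal continuation, so the robber survives. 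Throughout this walk the cop moves are legal because each $v'_i$ on $\pi$ has the descendant $v'$, a copy of the robber's position $v$.

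The step I expect to be the genuine obstacle is making this escaping play \emph{reachable}: the construction only works if the cops arrive at $v'$ with the robber still sitting on $v$, which fails if $org$ already takes the value $v$ at an earlier vertex of $\pi$, for then $v$ is covered prematurely, the robber is forced off before $v'$ is reached, and he may be chased into a copy of $u$ hanging off that earlier vertex. I would therefore not take an arbitrary witness but select $v'$ so that $v$ occurs only at the end of its path --- for instance by walking $\pi$ to its first copy of $v$ and testing whether the Neighbor cover condition already fails there, replacing $(v',\pi)$ by this higher, strictly shallower witness and iterating. Arguing that this descent terminates at a witness whose path meets $v$ only at its endpoint, and that for such a witness the escaping play is both legal and fatal, is the delicate part of the proof; by contrast, the forward direction's confinement argument is routine once the invariant is stated correctly.
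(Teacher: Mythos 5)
Your overall route is the paper's own: the forward implication is exactly the paper's Theorem~\ref{thm:validplan} (the same confinement invariant, propagated through the two clauses of the Neighbor cover condition along the cops' path, with the cop count bounded by the depth of $P$), and the converse is the paper's contrapositive argument via an escaping play down a witnessing root-to-$v'$ path. The genuine incompleteness is precisely the step you flag yourself and then defer: making the escaping play survive when $\pi$ carries an earlier copy of $v = org(v')$. Your proposed repair --- walk up to the first copy of $v$ on $\pi$ and iterate --- does not go through as described: that first copy may perfectly well satisfy the Neighbor cover condition (it can have a copy of $u$ among its descendants off the path $\pi$), in which case it is not a witness and the iteration has nowhere to go. You give no argument that a ``clean'' witness, whose path meets $v$ only at its endpoint, always exists, and I do not see one; so as written the converse direction is not proved.

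The gap closes by a much simpler observation that makes the witness selection unnecessary. Under the capture rule as stated in the paper, the robber standing on $v$ is caught only when $v$ \emph{and every vertex of} $N_D^+(v)$ carry a cop; a cop landing on the robber's vertex does not by itself force him off it. So let the robber simply sit on $v$ for the whole game. The cop player can legally walk down $\pi$ (every vertex of $\pi$ has the copy $v'$ of the robber's position as a descendant, so rule~\ref{num:placingRule2} is satisfied), and once $v'$ is reached every further placement is because of a descendant of $v'$. Since no vertex of $\pi$ and no descendant of $v'$ is a copy of $u$ --- this is exactly the simultaneous failure of both clauses of the Neighbor cover condition at $v'$ --- the vertex $u$ is never covered, and the robber on $v$ is never caught, regardless of whether $v$ itself was covered early on $\pi$. (To your credit, the paper's own two-line argument for this direction silently assumes the cop reaches $u$ only at the witnessing copy and ignores the premature-covering issue entirely, so you are being more careful than the source; but the cure is the stationary robber, not the witness descent.)
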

\begin{proof}
  $(\Leftarrow$) The decomposition is valid if the Neighbor cover condition holds by 
  Definition~\ref{def:nCover}. Suppose the contrary: there exist a pair of vertices $u,v \in V(D)$
  such that edge $(u,v) \in E(D)$ exists. Also a vertex $u' \in V(P)$ exists such that $org(u')=u$ 
  and $u'$ does not contain any copy of the vertex $v$ as a descendant. Since none of the conditions 
  in the definition of the Neighbor cover condition holds, there exists a~path $p$ from some root 
  to $u'$ such that $p$ does not contain any copy of the vertex $v$.
  
  Let the player use the decomposition according to rules specified in Definition~\ref{def:placingRules}.
  If the robber started on the vertex $u \in V(D)$, then the player could proceed along the path
  $p$, since all of its vertices contain the copy $u' \in V(P)$ as a descendant. When the player 
  covers the vertex $u$ because of $u'$, the robber can escape to the vertex $v \in V(D)$ since 
  the path $p$ does not contain any copy of that vertex and therefore it is not covered by a cop. 
  Since the vertex $u'$ does not contain any copy of $v$ as a descendant, the player playing 
  according the Definition~\ref{def:placingRules} can not cover $v$ and the robber wins. The 
  given decomposition therefore represents a strategy which is not winning.
  
  $(\Rightarrow$) The other direction is the subject of subsequent claims and will follow from 
  Theorem~\ref{thm:validplan}.
  
\end{proof}

\begin{theorem}
  If the DAG-depth of a digraph $D$ is $k$, then there exists a~valid DAG-depth decomposition
  of $D$ of depth $k$.
\label{thm:validdecK}
\end{theorem}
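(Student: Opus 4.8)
The plan is to induct on $|V(D)|$, following the three cases of Definition~\ref{def:ddp} and in each case exhibiting a decomposition whose depth realizes the corresponding expression for $ddp(D)$, after which the work is to verify the Neighbor cover condition of Definition~\ref{def:nCover}. For the base case $|V(D)|=1$ we have $ddp(D)=1$, and we take $P$ to be a single vertex mapped by $org$ onto the unique vertex of $D$; its depth is $1$ and the condition is vacuous. The inductive step splits on the number $p$ of reachable fragments.

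If $p\geq 2$, then $ddp(D)=\max_i ddp(R_i)$ and each $R_i$ is a proper subset of $V(D)$, so the induction hypothesis supplies a valid decomposition $(P_i,org_i)$ of the induced subdigraph on $R_i$ of depth $ddp(R_i)$. I let $P$ be the disjoint union of the $P_i$ with $org$ the common extension of the $org_i$; since the reachable fragments cover $V(D)$ this $org$ is surjective, and the depth of $P$ equals $\max_i ddp(R_i)=ddp(D)$. The point to check is that validity is inherited, and this rests on the observation that if $v\in R_i$ and $(v,u)\in E(D)$ then $u\in R_i$ as well, for otherwise adjoining $u$ would enlarge $R_i$ while keeping it reachable from its source, contradicting maximality. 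Hence $N_D^+(v)=N_{R_i}^+(v)$ for every $v\in R_i$, and because a disjoint union preserves descendants and roots within a component, every instance of conditions 1 and 2 already verified in $P_i$ transfers unchanged to $P$.

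If $p=1$ and $|V(D)|>1$, then $ddp(D)=1+\min_v ddp(D-v)$; fix a minimizer $v^*$, so $ddp(D-v^*)=k-1$ where $k=ddp(D)$. The induction hypothesis gives a valid decomposition $(P',org')$ of $D-v^*$ of depth $k-1$, and I form $P$ by adding a fresh vertex $r$ with $org(r)=v^*$ together with an edge from $r$ to every root of $P'$, so that $r$ becomes the unique root of $P$ and an ancestor of every vertex of $P'$. The depth grows by exactly one to $k$, and $org$ is surjective since $org'$ already covers $V(D)\setminus\{v^*\}$.

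The main obstacle, and the heart of the argument, is re-establishing the Neighbor cover condition in this last case. For $r$ itself, every out-neighbor $u\in N_D^+(v^*)$ lies in $V(D-v^*)$, hence has a copy in $P'$, which is a descendant of $r$, so condition 1 holds. For a vertex $w'\in V(P')$ with $org(w')=w$ I distinguish the out-neighbors $u$ of $w$ in $D$. If $u\neq v^*$ then $u$ is already an out-neighbor of $w$ in $D-v^*$, and I must show that whichever of the two conditions held in $P'$ survives in $P$: the descendant condition is immediate, while for the path condition I use that every root-to-$w'$ path in $P$ consists of the edge $r\to s$ (with $s$ a root of $P'$) followed by a root-to-$w'$ path in $P'$, so it still meets a copy of $u$. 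The genuinely new case is $u=v^*$, the edge $(w,v^*)$ destroyed by the deletion: here no copy of $v^*$ can be a descendant of $w'$, but $r$ is the only root of $P$ and is itself a copy of $v^*$, so every root-to-$w'$ path begins at a copy of $v^*$ and condition 2 is satisfied. This is precisely where prepending a single new root, rather than attaching copies of $v^*$ further down, is essential, and it is the step I expect to demand the most care.
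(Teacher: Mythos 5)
Your proposal is correct and follows essentially the same route as the paper: a single-vertex base case, a disjoint union of fragment decompositions when $p\geq 2$, and a fresh root mapped to the minimizing vertex $v^*$ attached above the roots of a decomposition of $D-v^*$ when $p=1$. If anything, your treatment is more careful than the paper's own, since you explicitly verify that out-neighbors stay inside a maximal reachable fragment and, in the $p=1$ case, you handle the edges $(w,v^*)$ pointing back into the deleted vertex via condition~2 of the Neighbor cover property — a point the paper's proof passes over by appealing only to induction on $D-v^*$.
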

\begin{proof}
  If $|V(D)|=1$, then $ddp(D)=1$. A decomposition with depth one exists, since it consists also 
  of the only vertex.\\
  A decomposition that consists of one vertex is always valid, since the original graph
  did not contain any edges and the Neighbor cover condition therefore always holds.
  
  If $|V(D)|>1$ and $D$ consists of the only reachable fragment, then the DAG-depth is computed
  as $ddp(d)=1 + \min_{v \in V(D)} (ddp(D - v))$. Such vertex $v$ is then the root of the 
  decomposition and is connected to the roots of the recursive decomposition of the rest of a 
  graph.\\
  Since the vertex $v$ was chosen to be the root, all other vertices are its descendants. 
  Therefore all the vertices of its out-neighborhood are his descendants, and for the rest of
  the graph the Neighbor covercondition holds by induction. That means the decomposition is
  valid.
  
  Otherwise, $D$ consists of more reachable fragments. The decomposition of each of them can
  be computed separately. When a disjoint union of them is made to a single graph, its depth
  will be equal to the maximum of the decompositions of the fragments. This is in accordance
  with Definition~\ref{def:ddp}.\\
  Since the decomposition is a disjoint union of the decompositions of the fragments, by induction
  the decompositions of the fragments are valid. Therefore their disjoint union is a valid 
  decomposition too.

\end{proof}

\begin{theorem}
  Let $D$ be a digraph for which there exists a valid DAG-depth decomposition of depth $k$. Then, 
  any strategy observing the rules of Definition~\ref{def:placingRules} is a winning
  strategy for at most $k$ cops.
\label{thm:validplan}
\end{theorem}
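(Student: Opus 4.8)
The plan is to prove the two assertions of the statement separately: that any rule-abiding play catches the robber, and that it never requires more than $k$ cops. The cop-count bound is the easy half. By Definition~\ref{def:placingRules}, the first cop sits on a root of $P$ and each further cop is placed in the out-neighbourhood in $P$ of the previously chosen vertex; hence the sequence of decomposition-vertices responsible for the cops is a directed path $w_0', w_1', \dots$ in the DAG $P$, starting at a root. Since $P$ has depth $k$, no directed path of $P$ has more than $k$ vertices, so at most $k$ cops are ever placed (and by the convention of Definition~\ref{def:placingRules} possibly fewer, when a vertex is already covered).

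For the capture, I would set up an invariant maintained along this descending path and proved by induction on the number of cops placed. Writing $w_i'$ for the decomposition-vertex responsible for the $i$-th cop, the invariant reads: after the $i$-th cop has been placed, either the robber is already caught, or every vertex that the (infinitely fast) robber can still reach without crossing a cop has a copy in $P$ that is a proper descendant of $w_i'$. The base case is Rule~\ref{num:placingRule2} itself: the chosen root $w_0'$ has a directed $P$-path to a copy of the robber's position, so if that position is $org(w_0')$ the robber is caught, and otherwise the copy is a proper descendant of $w_0'$; the confinement of the whole reachable zone then follows from the one-step lemma below.

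The heart of the argument is a one-step confinement lemma, and this is where validity (the Neighbor cover condition of Definition~\ref{def:nCover}) is used. Suppose the robber sits at $x$ with a copy $x'$ that is a proper descendant of $w_i'$, and moves along an edge $(x,y) \in E(D)$ to an \emph{uncovered} vertex $y$. Apply the Neighbor cover condition to $x'$ (a copy of $x$) and the out-neighbour $y$ of $x$. In alternative~(1) some copy of $y$ is a descendant of $x'$, hence a proper descendant of $w_i'$, and we are done. In alternative~(2) every root-to-$x'$ path in $P$ meets a copy of $y$; apply this to the concrete path obtained by concatenating the cop path $w_0' \to \cdots \to w_i'$ with a path $w_i' \to \cdots \to x'$. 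The copy of $y$ it contains lies either among $w_0', \dots, w_i'$, forcing $y$ to be one of the already-covered originals and contradicting that $y$ is uncovered, or strictly below $w_i'$, making it a proper descendant of $w_i'$ as required. Iterating this lemma along any directed path the robber might traverse through uncovered vertices shows that the robber's entire reachable zone stays confined to proper descendants of $w_i'$, which closes the induction.

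Finally I would read off both progress and termination from the invariant. As long as the robber is free, his reachable zone is nonempty and confined below $w_i'$, so $w_i'$ is not a leaf and, by Rule~\ref{num:placingRule2}, a legal next vertex $w_{i+1}'$ (a child of $w_i'$ lying on a $P$-path to a copy of the robber's position) exists; thus a rule-abiding play can never get stuck while the robber is free, and the descending path strictly increases its level in $P$. Because $P$ has depth $k$ this path reaches a leaf after at most $k$ cops, and at a leaf the invariant leaves no proper descendants for the robber to inhabit, so he must already be caught. Combining this with the cop-count bound gives the claim. I expect the main obstacle to be the confinement lemma, specifically the careful bookkeeping in alternative~(2): one must invoke the Neighbor cover condition along the \emph{specific} root-to-$x'$ path that runs through the cops already in play, so that every escape neighbour is pinned down as either already covered or still trapped beneath $w_i'$; handling the infinitely fast robber (arguing about the whole reachable zone rather than a single position) and the well-definedness of the strategy must be woven into the same induction.
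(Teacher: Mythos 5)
Your proposal is correct and follows essentially the same route as the paper's proof: the same confinement invariant (the robber's reachable positions have copies that are descendants of the decomposition-vertex responsible for the last cop), the same case analysis on the two alternatives of the Neighbor cover condition (with alternative~(2) resolved by noting the copy is either among the already-played cops, contradicting that the target is uncovered, or strictly below the current vertex), and the same depth-$k$ path argument for the cop count. Your version is somewhat more explicit about applying condition~(2) to the specific root-to-$x'$ path through the cops and about confining the entire reachable zone of the infinitely fast robber, but these are refinements of, not departures from, the paper's argument.
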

\begin{proof}
  A decomposition $(P,org)$ is valid if the Neighbor cover condition from Definition~\ref{def:nCover}
  holds. The cop player wins when the cop is placed on top of the robber to a vertex $r$ and all
  vertices from $N^+_D(r)$ are already covered by the cops.
  
  Let the last move of the cop be to vertex $u \in V(D)$, because of its copy $u' \in V(P)$ as in
  Definition~\ref{def:placingRules}. We claim that the robber may move only to vertices of $D$ 
  whose copies in $P$ are reachable from $u'$ in $P$.
  
  Before the robber moves, the previous statement holds because of the rule~\ref{num:placingRule2}
  of Definition~\ref{def:placingRules}.
  
  Let the robber be on a vertex $r \in V(D)$ and $r' \in V(P)$ its copy such that it is a descendant 
  of $u'$. The statement still holds if the robber moves along an edge $(r,v) \in E(D)$ to vertex 
  $v \in V(D)$ which has not yet been covered by a cop. From 
  Definition~\ref{def:nCover} we know that in the decomposition $P$ every path from a root to $r'$
  contains a copy of $v$ or $P$ contains a vertex $v' \in V(P)$ such that it is a descendant of $r'$.
  If $v'$ is a~descendant of $r'$, it is also a descendant of $u'$ since $r'$ is its descendant. 
  If every path from a root to $r'$ contains a copy of $v$, such copy must be a descendant of $u'$.
  If it was not, then $v'$ would have to lie on some path from a root to $u'$, and $v$ would have
  already been covered by a~cop by Definition~\ref{def:placingRules}.
  
  The previous invariant allows the player to always fulfill the rules of 
  Definition~\ref{def:placingRules}.
  
  The rules in Definition~\ref{def:placingRules} end with covering a vertex from $V(D)$ because of
  its copy which is a leaf in decomposition $P$. That means that all the neighbors of the covered 
  vertex have been covered before and the robber is caught. The decomposition therefore represents 
  a winning strategy.
  
  In every move, the vertex $v \in V(D)$ is covered because of some $v' \in V(P)$. Such vertex $v'$
  is always a child of the previous $v'$ and therefore all such vertices create a path in $P$. If 
  the player used more than $k$ cops, the path would need to be longer than $k$. Since the depth of
  $P$ is $k$, such path can not exist. Therefore, the decomposition represents a strategy for at most
  $k$ cops.

\end{proof}

\section{Merging the copies}
\label{sec:explarge}
We now return to the size aspect of a DAG-depth decomposition (say, the one obtained by 
Theorem~\ref{thm:validdecK}). We start with an example that it could be exponentially large.
To reduce the size of the decomposition, some copies of 
the same vertex should be merged while preserving validity of the decomposition. Not all
vertices with the same original can be merged (recall Figure~\ref{fig:mygraph}, though.

\begin{theorem}
  There exists a digraph $D$ such that its only valid and optimal DAG-depth decomposition without
  merging any vertices is exponentially large.
\label{thm:expLarge}
\end{theorem}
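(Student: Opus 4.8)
The plan is to exhibit a recursive family $D_0,D_1,D_2,\dots$ that iterates the obstruction of Figure~\ref{fig:mygraph}: a bidirectional pair $x_k\leftrightarrow y_k$ sitting above a recursively identical body and entered from two incomparable sources. The bidirectional edge is exactly what made the copies of $C,D$ non-mergeable in Figure~\ref{fig:mygraph}, and here I compound it level by level so the duplication multiplies. Concretely, let $D_0$ be a single vertex, and for $k\ge 1$ let $D_k$ consist of a fresh bidirectional pair $x_k\leftrightarrow y_k$, two fresh sources with $p_k\to x_k$ and $q_k\to y_k$, and a disjoint copy of $D_{k-1}$, together with edges from each of $x_k,y_k$ to each of the two sources $p_{k-1},q_{k-1}$ of $D_{k-1}$ (for $D_0$ the single vertex is its own source), so that each of $x_k,y_k$ reaches all of $V(D_{k-1})$. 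Then $|V(D_k)|=4k+1$, i.e.\ the graph is linear in $k$.

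First I would pin down the fragment structure and the DAG-depth. The only indegree-zero vertices of $D_k$ are $p_k$ and $q_k$, and they determine exactly two reachable fragments, $F_1=\{p_k,x_k,y_k\}\cup V(D_{k-1})$ and $F_2=\{q_k,x_k,y_k\}\cup V(D_{k-1})$, which overlap in $\{x_k,y_k\}\cup V(D_{k-1})$. A routine induction using Definition~\ref{def:ddp} then gives $ddp(D_k)=2+ddp(D_{k-1})=2k+1$: in the single-fragment graph $F_1$ the minimizing first removal is $x_k$ (symmetrically $y_k$), which isolates $p_k$ and leaves $\{y_k\}\cup D_{k-1}$, costing two levels above $ddp(D_{k-1})$. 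The point to verify is that no removal inside the body beats this, which holds because the $2$-cycle $x_k\leftrightarrow y_k$ must still be resolved afterwards and alone forces two extra levels.

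Next I would read off the canonical non-merged decomposition. Since $D_k$ has two fragments, the construction of Theorem~\ref{thm:validdecK} produces the disjoint union of an optimal decomposition of $F_1$ and one of $F_2$; each puts $x_k$ (resp.\ $y_k$) at the root with $y_k$ (resp.\ $x_k$) one level below, and hangs a full copy of an optimal decomposition of $D_{k-1}$ beneath. This yields the recurrence $|P_k|=2|P_{k-1}|+4$, hence $|P_k|=\Theta(2^k)=2^{\Theta(|V(D_k)|)}$, which is exponential in the size of $D_k$.

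Finally, and this is the crux, I would show the blow-up is unavoidable as long as no vertices are merged, reading \emph{without merging} as \emph{the decomposition is a forest}, since merging is precisely what gives a copy more than one parent. The statement to prove by induction on $k$ is that every valid forest decomposition of $D_k$ of optimal depth $2k+1$ contains two vertex-disjoint subtrees, each a valid optimal-depth decomposition of a copy of $D_{k-1}$; the bound $|P_k|\ge 2^k$ is then immediate. The heart of the induction is that a single copy of the shared body cannot serve both fragments in a forest: in $F_1$ the body is guarded by the order $x_k$ above $y_k$, while in $F_2$ it is guarded by $y_k$ above $x_k$ (forced by the bidirectional edge together with the Neighbor cover condition of Definition~\ref{def:nCover}), so a shared body-copy would need two incomparable ancestors, i.e.\ two parents on incomparable branches, contradicting the forest property. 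I expect this incomparability argument to be the main obstacle, since it must rule out \emph{every} clever forest rather than only the canonical one; an equivalent and perhaps more robust route is the game-theoretic one behind Theorem~\ref{thm:validplan}, exhibiting $2^k$ distinct robber runs (at each level the robber commits to the $p_k$-side or the $q_k$-side) and arguing that the bidirectional gadget forces distinct runs down distinct root-to-leaf branches, so the strategy forest has at least $2^k$ leaves.
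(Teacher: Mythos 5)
Your overall plan mirrors the paper's: build a recursive digraph with two overlapping reachable fragments at every level so that the shared part must be duplicated, yielding a decomposition of size $2^{\Theta(k)}$ on a graph of size $\Theta(k)$. However, your concrete construction does not have the DAG-depth you claim, and this breaks the whole argument. The recurrence $ddp(D_k)=2+ddp(D_{k-1})=2k+1$ assumes the cop player must peel the gadgets off from the top, but your graph is \emph{path-like}: level $j$ is connected to level $j-1$ only through the four edges into $p_{j-1},q_{j-1}$, so deleting the two vertices $x_j,y_j$ turns $p_j,q_j$ into sinks and disconnects everything above level $j$ from everything below it. The cop player can therefore binary-search, exactly as on a directed path. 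Already for $k=3$ this beats your bound: in the fragment reachable from $p_3$, place the first cop on $x_2$ (the graph stays one reachable fragment), the second on $y_2$; the remainder splits into $\{p_3,x_3,y_3,p_2,q_2\}$ and the two fragments of $D_1$, each of DAG-depth $3$, so $ddp(D_3)\le 5<7$. In general $ddp(D_k)=O(\log k)$, the optimal decomposition has logarithmic depth, and your inductive claim that every optimal forest decomposition contains two disjoint optimal decompositions of $D_{k-1}$ has no basis, since optimal play does not descend level by level at all.

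The paper's construction is designed precisely to defeat this divide-and-conquer: it takes $V(D)=\{a_1,\dots,a_n,b_1,\dots,b_n\}$ with \emph{all} forward edges $(a_i,a_j),(a_i,b_j),(b_i,a_j),(b_i,b_j)$ for $i<j$. Because every vertex has edges to all later vertices (two interleaved transitive tournaments), removing a middle vertex disconnects nothing, the DAG-depth is genuinely $n$, and the first cop in the fragment missing $b_1$ is forced onto $a_1$ (any other move leaves one of the depth-$n$ tournaments $\{a_1,\dots,a_n\}$, $\{a_1,b_2,\dots,b_n\}$ untouched). If you want to rescue your gadget-chain idea, you would need to add edges from every level to \emph{all} lower levels, not just the adjacent one — at which point you essentially recover the paper's graph. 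Separately, even with a correct construction, your final step (that \emph{every} non-merged, i.e.\ forest, optimal decomposition is exponential, not just the canonical one from Theorem~\ref{thm:validdecK}) is only sketched and acknowledged as open; the paper handles this by arguing the first move in each fragment is forced and then recursing, which is the argument you would also need to make rigorous.
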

\begin{proof}
  Let $D$ be a digraph such that $V(D) = \{a_1, a_2, \dots, a_n, b_1, b_2, \dots,$ $b_n\}$ for
  $n \in \mathbb{N}$ and $E(D) = \{(a_i,a_j)$ $\cup (a_i,b_j) \cup (b_i,a_j) \cup (b_i,b_j)\}$ for 
  every $1 \leq i < j \leq n$. See Figure~\ref{gra:expo}.

  The digraph $D$ consists of two isomorphic reachable fragments – $V(D) \setminus \{a_1\}$ and 
  $V(D) \setminus \{b_1\}$.
  
  In the reachable fragment $V(D) \setminus \{b_1\}$ the only optimal first move of the cop player is 
  placing the cop onto $a_1$ since if he made any other move, one of the subgraphs 
  $\{a_1,a_2, \dots, a_n\}$ and $\{a_1,b_2, \dots, b_n\}$ is left uncovered. These subgraphs each 
  require another $n$ cops to catch the robber, while the DAG-depth of $D$ is $n = 1 + n - 1$.
  
  After covering $a_1$, the remaining digraph has the vertex set $\{a_2, \dots,$ $a_n, b_2, \dots,$ 
  $b_n\}$. Its decomposition can be found by induction.
  
  The reachable fragment $V(D) \setminus \{a_1\}$ is isomorphic to $V(D) \setminus \{b_1\}$ and 
  therefore the only optimal move is to cover $b_1$ and the remaining digraph is the same as for
  the first reachable fragment.
  
  The decomposition of $D$ will thus contain decomposition of the remaining digraph two times – as a 
  descendant of $a_1$ and as a descendant of $b_1$. The decomposition therefore consists of 
  $\sum_{i=1}^n 2^i = 2^{n+1}-2$ vertices.

\end{proof}

\begin{figure}[H]
\begin{center}
\begin{tikzpicture}

\tikzset{vertex/.style = {shape=circle,draw,minimum size=1.5em}}
\tikzset{edge/.style = {->,> = latex', line width=1.4pt}}
\tikzset{edgea/.style = {-,> = latex', line width=1.4pt}}
\node[vertex] (a1) at (0,3) {$a_1$};
\node[vertex] (a2) at (1.5,3) {$a_2$};
\node[vertex] (a3) at (3,3) {$a_3$};
\node[vertex] (an) at (6,3) {$a_n$};
\node[vertex] (b1) at (0,0) {$b_1$};
\node[vertex] (b2) at (1.5,0) {$b_2$};
\node[vertex] (b3) at (3,0) {$b_3$};
\node[vertex] (bn) at (6,0) {$b_n$};
\node at (4.5,3) {\ldots};
\node at (4.5,0) {\ldots};
\node (anb1) at (5.68,2.75) {};
\node (anb2) at (5.6,2.78) {};
\node (bna1) at (5.67,0.25) {};
\node (bna2) at (5.6,0.22) {};
\node[vertex] (da1) at  (0,-1.5) {$a_1$};
\node[vertex] (db1) at  (6,-1.5) {$b_1$};
\node[vertex] (da21) at  (-1,-3) {$a_2$};
\node[vertex] (db21) at  (1,-3) {$b_2$};
\node[vertex] (da22) at  (5,-3) {$a_2$};
\node[vertex] (db22) at  (7,-3) {$b_2$};
\node[vertex] (da31) at  (-1.5,-4.5) {$a_3$};
\node[vertex] (db31) at  (-0.5,-4.5) {$b_3$};
\node[vertex] (da32) at  (0.5,-4.5) {$a_3$};
\node[vertex] (db32) at  (1.5,-4.5) {$b_3$};
\node[vertex] (da33) at  (6.5,-4.5) {$a_3$};
\node[vertex] (db33) at  (7.5,-4.5) {$b_3$};
\node[vertex] (da34) at  (4.5,-4.5) {$a_3$};
\node[vertex] (db34) at  (5.5,-4.5) {$b_3$};
\node at (0,-5.5) {\vdots};
\node at (6,-5.5) {\vdots};

\draw[edge] (a1) to (a2);
\draw[edge][bend left] (a1) to (a3);
\draw[edge][bend left] (a1) to (an);
\draw[edge] (a2) to (a3);
\draw[edge][bend left] (a2) to (an);
\draw[edge][bend left] (a3) to (an);
\draw[edge] (b1) to (b2);
\draw[edge][bend right] (b1) to (b3);
\draw[edge][bend right] (b1) to (bn);
\draw[edge] (b2) to (b3);
\draw[edge][bend right] (b2) to (bn);
\draw[edge][bend right] (b3) to (bn);
\draw[edge] (b1) to (a2);
\draw[edge] (b1) to (a3);
\draw[edge] (a1) to (b2);
\draw[edge] (a1) to (b3);
\draw[edge] (b2) to (a3);
\draw[edge] (a2) to (b3);
\draw[edgea] (a1) to (bna1);
\draw[edge] (a2) to (bn);
\draw[edgea] (a3) to (bna2);
\draw[edgea] (b1) to (anb1);
\draw[edge] (b2) to (an);
\draw[edgea] (b3) to (anb2);
\draw[edge] (da1) to (da21);
\draw[edge] (da1) to (db21);
\draw[edge] (db1) to (da22);
\draw[edge] (db1) to (db22);
\draw[edge] (da21) to (da31);
\draw[edge] (da21) to (db31);
\draw[edge] (db21) to (da32);
\draw[edge] (db21) to (db32);
\draw[edge] (da22) to (da34);
\draw[edge] (da22) to (db34);
\draw[edge] (db22) to (da33);
\draw[edge] (db22) to (db33);

\end{tikzpicture}
\end{center}
\caption{A digraph and its exponentially large valid and optimal DAG-depth decomposition without
merging of any vertices}
\label{gra:expo}
\end{figure}
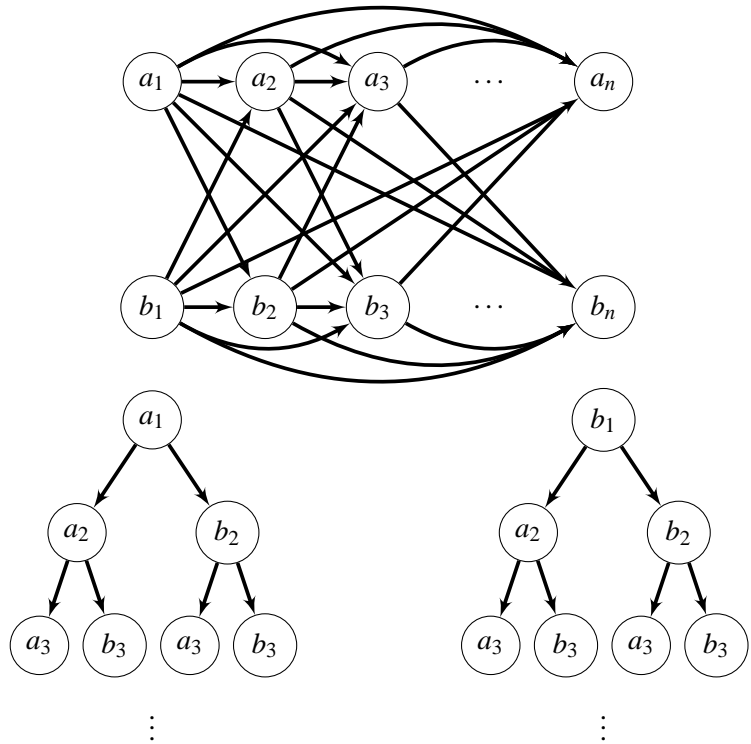

\begin{definition}
  Two vertices $u,v \in V(P)$ are \textit{mergeable} in the DAG-depth decomposition $(P,org)$
  if the conditions $1-3$ hold. Furthermore, $u$~and $v$ are \textit{optimally mergeable} if
  they are mergeable and also the condition $4$ holds.
  \begin{enumerate}
   \item $org(u)=org(v)$
   \item After merging $u$ with $v$, $P$ remains a DAG.
   \item Merging $u$ with $v$ does not break the Neighbor cover property from 
   Definition~\ref{def:nCover}.
   \label{con:mergingB}
   \item Merging $u$ with $v$ does not increase the depth of the decomposition.
   \label{con:mergingD}
  \end{enumerate}
\label{def:merging}
\end{definition}

For example, all duplicits in the example of Theorem~\ref{thm:expLarge} are optimally mergeable.

The following is obvious from the definition:

\begin{proposition}
  Let $(P,org)$ be a valid and optimal DAG-depth decomposition of some graph and $u,v \in V(P)$ an 
  optimally mergeable pair of vertices of $P$. Then, after merging $u$ and $v$, $P$ is still 
  a valid and optimal decomposition.
\label{thm:mergingvalid}
\end{proposition}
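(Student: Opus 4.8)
The plan is to check the three things an \emph{optimal valid} decomposition must satisfy --- being a DAG-depth decomposition at all, validity, and optimality --- one at a time, noting that the first two are handed to us directly by Definition~\ref{def:merging} and that only optimality carries any content. Write $\mu : V(P) \to V(P')$ for the map that identifies $u$ and $v$ and is the identity elsewhere, and let $org'$ be the induced labelling; this is well defined precisely because $org(u)=org(v)$ (condition $1$ of Definition~\ref{def:merging}).

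First I would argue that $(P',org')$ is a DAG-depth decomposition in the sense of Definition~\ref{def:nCover}. By condition $2$ of Definition~\ref{def:merging} the merge leaves $P'$ acyclic, so $P'$ is a DAG. Surjectivity of $org'$ is inherited from $org$: identifying two vertices that already share an $org$-value does not shrink the image, so every vertex of $D$ still has a preimage (for any $d$ with some $org$-preimage $x$, either $\mu(x)=x$ or $\mu(x)$ is the merged vertex, and in both cases $org'(\mu(x))=d$). Validity is then immediate from condition $3$, which states exactly that the Neighbor cover condition survives the merge.

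The substance of the proposition is optimality, where I would show that the depth is preserved exactly. Put $k = ddp(D)$; since $(P,org)$ is optimal we have $\mathrm{depth}(P)=k$. Condition $4$ gives the upper bound $\mathrm{depth}(P') \le k$, so it remains to rule out a drop in depth. For this, take a longest directed path $\pi$ in $P$, which has length $k$. Its image $\mu(\pi)$ is a directed walk of length $k$ in $P'$, since edges map to edges and none is lost. The only way $\mu(\pi)$ could fail to be a path of length $k$ would be for $\mu$ to identify two of its vertices, i.e. for both $u$ and $v$ to lie on $\pi$; but then $\mu(\pi)$ would revisit the merged vertex and hence exhibit a directed cycle in $P'$, contradicting the acyclicity guaranteed by condition $2$. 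Therefore $\mu(\pi)$ is a directed path of length $k$, giving $\mathrm{depth}(P') \ge k$, whence $\mathrm{depth}(P') = k = ddp(D)$ and $(P',org')$ is optimal.

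The only step with any bite is this last lower bound on the depth: a priori, merging two vertices might shorten the path witnessing the depth. The observation that dissolves the difficulty is that the merge map sends directed paths to directed walks of the same length, and acyclicity (condition $2$) forces those walks to remain paths, so the extremal path of length $k$ is never destroyed. Alternatively, one could obtain $\mathrm{depth}(P') \ge k$ without combinatorics: the merged decomposition is valid, so by Theorem~\ref{thm:validplan} it yields a winning strategy for at most $\mathrm{depth}(P')$ cops, whence $ddp(D) \le \mathrm{depth}(P')$. Everything else is a direct reading of Definition~\ref{def:merging}, which is why the statement is labelled as obvious.
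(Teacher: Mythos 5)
Your proof is correct, and it takes the only natural route: the paper itself offers no argument at all (it declares the proposition ``obvious from the definition''), and your write-up is a faithful, careful unpacking of why conditions 1--3 of Definition~\ref{def:merging} hand over the decomposition structure and validity directly. You also correctly isolate and close the one point the definition does not literally cover --- that the depth cannot \emph{decrease} below $ddp(D)$ --- and both of your arguments for it (the longest path survives the merge by acyclicity, or validity plus Theorem~\ref{thm:validplan} forces $ddp(D)\le\mathrm{depth}(P')$) are sound.
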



The trivial lower bound on the size of a valid decomposition is equal to the number of vertices of the 
original graph, but such a decomposition may not be optimal. The question of minimizing the size 
of a valid and optimal decomposition is left for further investigations.

\section{Closure of a DAG-depth decomposition}
\label{chp:closure}
While previous text focused on how to construct a DAG-depth decomposition, or a game plan, for a given
digraph, now we look from the other side. Roughly, having a game plan, can we easily say on which digraphs
we can win with it?

Recall that in the case of tree-depth this was trivial - already the definition of the tree-depth 
decomposition worked with the concept of a closure of a rooted forest, which, at the same time,
represents the unique maximal graph on which the cop player always wins when following the decomposition
as the game plan.

However, in the case of digraphs and DAG-depth, we again face unprecedented complications. A 
DAG-depth decomposition, unlike an elimination tree, can contain more copies of a single vertex
of the original digraph. Therefore a problem, which was trivial in undirected graphs, becomes complex.

In the closure obtained from an elimination tree, each vertex is connected with all of its former
ancestors and descendants. In a DAG-depth decomposition, more copies of a vertex can have different
ancestors and descendants.

We thus define the following.

\begin{definition}[Closure]
  A \textit{partial closure} $C$ is a directed graph obtained from a DAG-depth decomposition $(P,org)$ of
  some graph $D$, such that $D$ is a spanning subgraph of $C$ and $(P,org)$ is still a valid DAG-depth 
  decomposition for the digraph $C$. A \textit{closure} is a maximal partial closure by inclusion.
\label{def:closure}
\end{definition}


\begin{theorem}
  For a DAG-depth decomposition $(P,org)$ of a digraph $D$, we construct a digraph $C$, such that 
  $V(C)=V(D)$ by iteratively adding edges $(u,v)$ for $u,v \in V(C)$ if for every $u' \in V(P)$ which
  is a copy of $u$
  \begin{enumerate}
    \item there exists $v' \in V(P)$ such that $v'$ is a copy of $v$ and $v'$ is a descendant of $u'$ 
    in $P$, or
    \item every path from a root of $P$ to $u'$ contains a copy of $v$.
  \end{enumerate}
  Then, $C$ is a closure of $P$, which is thus unique.
\label{thm:closure}
\end{theorem}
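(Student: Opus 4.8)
The plan is to observe that the edge-admission rule of the construction is, verbatim, the per-edge instance of the Neighbor cover condition of Definition~\ref{def:nCover}, and — crucially — that this rule refers only to the fixed data $P$ and $org$ (copies, descendants and root-paths in $P$) and never to the edge set of the $C$ being built. Admitting $(u,v)$ asks exactly that every copy $u'$ of $u$ either have a descendant copy of $v$, or have every root-path passing through a copy of $v$; this is the Neighbor cover condition restricted to the single out-neighbor $v$ of $u$. First I would make this identification explicit and conclude that the ``iterative'' process is in fact order-independent: the produced edge set is a function of $(P,org)$ alone, namely the set of all pairs $(u,v)$ meeting the admission rule, so the construction terminates with a well-defined graph $C$.

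Next I would verify that $C$ is a partial closure in the sense of Definition~\ref{def:closure}. By construction $V(C)=V(D)$. To see that $D$ is a spanning subgraph, note that $(P,org)$ is a valid decomposition of $D$, so every edge $(u,v)\in E(D)$ satisfies the Neighbor cover condition, which is precisely the admission rule; hence $E(D)\subseteq E(C)$. Conversely, every edge of $C$ was admitted exactly because it meets that condition, and since the Neighbor cover condition is a conjunction over out-neighbors, checking it edge-by-edge is equivalent to checking it against all of $N_C^+$. Thus $(P,org)$ is valid for $C$, and $C$ is a partial closure. (The decomposition $P$ itself is untouched, so it remains a DAG of unchanged depth; validity is the only requirement at stake.)

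The heart of the argument is maximality, and it reduces to the same observation. Let $C'$ be any partial closure. Validity of $(P,org)$ for $C'$ means every out-neighbor of every vertex meets conditions~1 or~2, i.e. every edge $(u,v)\in E(C')$ satisfies the admission rule and therefore lies in $E(C)$. Hence $E(C')\subseteq E(C)$ for \emph{every} partial closure $C'$, so $C$ is not merely maximal but the maximum partial closure under inclusion. Since every partial closure shares the vertex set $V(D)$, inclusion is a partial order on edge sets, and a maximum element of a poset is its unique maximal element; this delivers both ``$C$ is a closure'' and ``the closure is unique'' simultaneously, with $C$ being that closure.

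The one point that needs care — and what I would flag as the main obstacle — is the order-independence of the iterative construction: one must confirm that admitting an edge can neither enable nor disable the admission of any other edge. This is immediate once one notices that the admission rule is phrased purely in terms of $P$ and $org$ and is blind to the current edges of $C$. Without this observation, the admitted set could in principle depend on the order of additions, and ``maximal'' could differ from ``maximum,'' so uniqueness would not follow; with it, the whole argument collapses to the single identification of the admission rule with the per-edge Neighbor cover condition.
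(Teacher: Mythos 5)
Your proposal is correct and follows essentially the same route as the paper: both identify the edge-admission rule with the per-edge Neighbor cover condition of Definition~\ref{def:nCover}, conclude that $C$ is a partial closure, and argue that any edge outside $E(C)$ would violate that condition, giving maximality and uniqueness. Your write-up is more careful than the paper's (in particular, you make explicit that the rule depends only on $(P,org)$ and not on the edges already added, so the iteration is order-independent and $C$ is the maximum, not merely a maximal, partial closure — which is exactly what uniqueness requires), but the underlying argument is the same.
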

\begin{proof}
  The conditions in this theorem are the same as the Neighbor cover property in Definition~\ref{def:nCover}
  and so $C$ is clearly a partial closure. On the other hand, every other edge not in $E(C)$ would, by its
  own, violate Definition~\ref{def:nCover} and so $C$ is maximal.

\end{proof}

These are some of the informal ideas worth further investigation - see \cite{bakalarka} for more details.

\section{Conclusion}
\label{chp:end}
We have presented a definition of a DAG-depth decomposition which extends the concept of an elimination
tree as a winning strategy for the cop player in the lift-free version of the cops-and-robber game to directed graphs.
Unlike in the case of an elimination tree, the vertex set of a DAG-depth decomposition is not equal
to the vertex set of the original graph. That requires us to deal with the two vertex sets and to 
find a way to map between them. Since the vertex sets are not equal, the size aspect of the DAG-depth
decomposition becomes a problem. In the primitive handling, the size of the decomposition can grow 
exponentially. To make the decomposition smaller and therefore acceptable as a strategy for the cop
player, we provide a definition of mergeable and optimally mergeable vertices.

Secondly, we present a definition of the closure as the largest graph where the given decomposition 
works as a winning strategy. We also provide a way to deterministically obtain a closure for a 
given decomposition.

The main direction for possible future improvements and extension of our results is to study the lower
bounds on the size of a valid and optimal DAG-depth decomposition of a digraph
and a relationship between these bounds and the properties of given digraphs.

%
%
\bibliography{bib}
\end{document}